\documentclass[11pt,a4paper]{article}

\usepackage[T1]{fontenc}
\usepackage[utf8]{inputenc}
\usepackage{lmodern}
\usepackage[margin=1in]{geometry}
\usepackage{amsmath,amssymb,amsthm}
\usepackage{enumitem}
\usepackage{microtype}
\usepackage[hidelinks]{hyperref}

\numberwithin{equation}{section}
\newtheorem{theorem}{Theorem}[section]
\newtheorem{lemma}[theorem]{Lemma}
\allowdisplaybreaks[1]

\hypersetup{
  pdftitle={Exponential investors with weakly mean-reverting prices},
  pdfauthor={Bal\'azs Hoffmann and Mikl\'os R\'asonyi},
  pdfkeywords={mean-reversion, exponential utility, portfolio choice}
}

\title{Exponential investors with weakly mean-reverting prices\thanks{This research was supported by
grant K 143529 of the National Research, Development and Innovation Office (NKFIH).}}
\author{Bal\'azs Hoffmann
\quad and \quad
Mikl\'os R\'asonyi}
\date{}

\begin{document}
\maketitle
\thispagestyle{empty}

\begin{abstract}
We investigate a continuous-time financial market where the asset price exhibits weak (sublinear) mean reversion and has a nonzero drift. Complementing earlier work on strong (superlinear) mean reversion, we show that, for an investor maximizing expected exponential utility, the certainty equivalent grows as $O(T^{2\beta+1})$ where $0<\beta<1$ is the strength of mean reversion. An explicit asymptotically optimal strategy is also given.
\end{abstract}

\section{Introduction}

Mean-reversion has already been documented in prices of commodity futures (see \cite{ref1,ref3}) and they also arise in pairs trading procedures. Optimal investment for linear mean-reversion has been studied in detail, we refer to \cite{ref7}, \cite{ref2}.

In the present communication we continue the work initiated in \cite{ref6}: we study how the long-term asymptotic performance of investors with exponential utility is affected by the degree of non-linear mean-reversion.

Asset price is modelled by a mean-reverting process plus a linear drift term $\mu t$ for some $\mu\in\mathbb{R}$, here $t\geq0$ is the time variable. Results of \cite{ref6} have revealed that the cases $\mu=0$ and $\mu\neq0$ are qualitatively different. We stick to $\mu\neq0$ in the current paper, $\mu=0$ will be addressed elsewhere.

In \cite{ref6} the superlinear (``strongly mean-reverting'') regime has been explored (this corresponds to $\beta\geq1$ in the dynamics \eqref{eq:2.2} below). In that paper, the optimal growth rate of the certainty equivalent was determined as the trading horizon $T$ tends to $\infty$. It was found to be of the order $O(T^{2\beta+1})$ and a concrete asymptotically optimal strategy has been given.

We complement those results by investigating the sublinear case (that is, $0<\beta<1$), the case of ``weak mean reversion''. We establish the same growth rate $O(T^{2\beta+1})$ for the certainty equivalent and exhibit a concrete strategy with asymptotically optimal performance. Unlike in \cite{ref6}, the functions involved are not $C^2$ near the origin and thus they need to be smoothed out. Thus the asymptotically optimal strategy has a more complicated form in the present paper but it is analogous to the case $\beta\geq1$ away from the origin. The line of attack from \cite{ref6} fails in the present case, as explained below, and we take an alternative route for the proof.

\newpage
\section{Model description}
\label{sec:2}
We consider a financial market consisting of a riskless asset with
zero interest rate and a risky asset whose price process $S=(S_t)_{t\geq0}$ satisfies
\begin{align}
dS_t &= \mu\,dt+dX_t, \qquad S_0=0,\label{eq:2.1}\\
dX_t &= -\alpha\,\operatorname{sgn}(X_t)|X_t|^\beta\,dt+dB_t,
\qquad X_0=0,\label{eq:2.2}
\end{align}
where the latter process is mean-reverting around $0$. Here $0<\beta<1$, $\alpha>0$, $\mu\in\mathbb R$, and
$(B_t)_{t\geq0}$ is a standard Brownian motion on a filtered probability space
$(\Omega,\mathcal F,(\mathcal F_t)_{t\geq0},\mathbb P)$. We write
\begin{equation}\label{eq:2.3}
\theta_t:=\mu-\alpha\,\operatorname{sgn}(X_t)|X_t|^\beta,
\qquad\text{so that}\qquad dS_t=\theta_t\,dt+dB_t.
\end{equation}

The equation \eqref{eq:2.2} has a unique strong solution by e.g.\ Theorem 1.2 of \cite{ref8} since its drift
satisfies a one-sided Lipschitz condition.

For a finite horizon $T>0$, let $\mathcal A_T$ denote the class of predictable, $S$-integrable strategies
$H$ for which the gains process $V_t^H:=(H\cdot S)_t$, $t\in[0,T]$ is bounded from below by a constant.
Here $\cdot$ refers to stochastic integration.

An investor maximizing expected exponential utility with risk-aversion parameter equal
to one has optimal value
\begin{equation}\label{eq:2.4}
u_T:=\sup_{H\in\mathcal A_T}\mathbb E\bigl[-\exp\{-V_T^H\}\bigr],
\end{equation}
and the corresponding certainty equivalent is
\begin{equation}\label{eq:2.5}
c_T:=-\log(-u_T).
\end{equation}

If $\mathbb H=\{H^T\}_{T\geq1}$ is a family of strategies with $H^T\in\mathcal A_T$, we define its certainty-equivalent
order by
\begin{equation}\label{eq:2.6}
C(\mathbb H):=\sup\left\{\theta>0:\liminf_{T\to\infty}
\frac{-\log\mathbb E\exp\{-V_T^{H^T}\}}{T^\theta}>0\right\}.
\end{equation}
Our goal is to show that, for the model \eqref{eq:2.1} with $\mu\neq0$ no admissible family can have order
larger than $2\beta+1$, and that an explicit family of smoothed power strategies attains exactly
this order.

The paper is organized as follows. Section~3 states the main result. Section~4 proves
the upper bound. Section~5 constructs a smoothed power function, and section~6 derives
the terminal wealth identity and admissibility of the candidate strategy. Section~7 proves
exponential-moment estimate. The lower bound is proved in section~8 and in section~9.
Finally, section~10 completes the proof.

\section{Main results}
\label{sec:3}
Throughout the paper, $C$ with subscripts will refer to various constants
whose value we do not wish to make explicit at the given occurrence. Subscripts refer to the
quantities on which the given constant depends, for instance, $C_\mu$ is a constant that depends
(solely) on $\mu$.

\begin{theorem}[Optimal order of certainty equivalent growth for sublinear mean reversion]
\label{thm:3.1}
Assume $0<\beta<1$, $\alpha>0$, $\mu\neq0$. Then the following statements hold.
\begin{enumerate}[label=(\roman*)]
\item There is a constant $C_{\alpha,\beta,\mu}<\infty$ such that
\begin{equation}\label{eq:3.1}
c_T\leq C_{\alpha,\beta,\mu}T^{2\beta+1},\qquad T\geq1.
\end{equation}
Consequently every admissible family $\mathbb H=\{H^T\}_{T\geq1}$ satisfies
\begin{equation}\label{eq:3.2}
C(\mathbb H)\leq2\beta+1.
\end{equation}
\item For each fixed $\varepsilon>0$, let $\varphi_\varepsilon$ be the $C^2$ smoothing of
$x\mapsto\operatorname{sgn}(x)|x|^\beta$ constructed in
Lemma~5.1. Define, for $0\leq t\leq T$,
\begin{equation}\label{eq:3.3}
H_t^{\beta,\varepsilon,T}:=\bigl((T+1-t)^\beta-1\bigr)\varphi_\varepsilon(S_t).
\end{equation}
Then $H^{\beta,\varepsilon,T}\in\mathcal A_T$ for every $T\geq1$.
\item The admissible family $\mathbb H^{\beta,\varepsilon}:=\{H^{\beta,\varepsilon,T}\}_{T\geq1}$ satisfies
\begin{equation}\label{eq:3.4}
C(\mathbb H^{\beta,\varepsilon})=2\beta+1.
\end{equation}
\end{enumerate}
\end{theorem}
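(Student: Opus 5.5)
We handle the three parts separately, starting with the upper bound (i).

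\textbf{Part (i).} We use the standard duality for exponential utility. Take the minimal martingale measure $\mathbb Q$ with density
\[
Z_T=\exp\Bigl\{-\int_0^T\theta_t\,dB_t-\tfrac12\int_0^T\theta_t^2\,dt\Bigr\}.
\]
Then $c_T\le H(\mathbb Q|\mathbb P)=\tfrac12\mathbb E_{\mathbb Q}\int_0^T\theta_t^2\,dt$, provided $\mathbb Q$ is a genuine probability measure and $V^H$ is a $\mathbb Q$-supermartingale for every $H\in\mathcal A_T$. The latter holds because $V^H$ is bounded below. Under $\mathbb Q$ the process $S$ is a Brownian motion $W$, and $X_t=W_t-\mu t$. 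Since $|\theta_t|\le|\mu|+\alpha|X_t|^\beta$, we get
\[
\mathbb E_{\mathbb Q}\theta_t^2\le C\bigl(1+\mathbb E|W_t-\mu t|^{2\beta}\bigr)\le C_{\alpha,\beta,\mu}(1+t^{2\beta}),
\]
and integrating gives $c_T\le C T^{2\beta+1}$. To justify that $Z$ is a true martingale, we check the Beneš-type linear growth condition: the drift $|x|^\beta$ grows sublinearly, so Novikov on small intervals suffices. Inequality \eqref{eq:3.2} follows directly, because for any family $\mathbb H$ we have $-\log\mathbb E e^{-V_T^{H^T}}\le c_T$.

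\textbf{Part (ii).} Set $g(t)=(T+1-t)^\beta-1$ and let $\Phi_\varepsilon$ be a primitive of $\varphi_\varepsilon$, which is $C^3$ and behaves like $|x|^{\beta+1}/(\beta+1)$ away from the origin. Itô's formula applied to $g(t)\Phi_\varepsilon(S_t)$ gives
\[
V_T=\int_0^T g\varphi_\varepsilon(S)\,dS=-g(0)\Phi_\varepsilon(0)-\int_0^T g'(t)\Phi_\varepsilon(S_t)\,dt-\tfrac12\int_0^T g(t)\varphi_\varepsilon'(S_t)\,dt,
\]
using $g(T)=0$. Since $g'\le 0$, the second term is at least $-C_\varepsilon T^{\beta}$ (because $\Phi_\varepsilon$ is bounded below). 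The third term is bounded below by $-C_\varepsilon T^{\beta+1}$, because the smoothing makes $\varphi_\varepsilon'$ bounded. So $V_t$ is bounded below by a deterministic constant, and the same computation on $[0,t]$ gives the bound for every $t\le T$. This shows $H\in\mathcal A_T$.

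\textbf{Part (iii).} By (i) it suffices to show
\[
\liminf_{T\to\infty} T^{-(2\beta+1)}\bigl(-\log\mathbb E e^{-V_T}\bigr)>0.
\]
The identity above shows that the dominant positive contribution is
\[
\int_0^T\beta(T+1-t)^{\beta-1}\Phi_\varepsilon(S_t)\,dt\approx\int_0^T (T+1-t)^{\beta-1}|\mu t+X_t|^{\beta+1}\,dt,
\]
which is of deterministic order $\int_0^T (T-t)^{\beta-1}t^{\beta+1}\,dt\asymp T^{2\beta+1}$. The correction term is only $O(T^{\beta+1})$.

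We therefore need an exponential-moment estimate showing that $X$ cannot cancel the drift $\mu t$ on a substantial portion of time. Concretely, we aim to prove that for $t\in[T/4,T/2]$ the event $\{|X_t|\ge|\mu|t/2\}$ has exponentially small probability. We also need a bound of the form
\[
\mathbb E\exp\Bigl\{-c\int_{T/4}^{T/2}(T-t)^{\beta-1}|S_t|^{\beta+1}\,dt\Bigr\}\le e^{-c'T^{2\beta+1}}.
\]
The plan is to show that $|S_t|^{\beta+1}\ge c t^{\beta+1}$ on this time window outside an event $\{\sup_{s\le T}|X_s|\ge |\mu| T/8\}$. We then bound the probability of that event by $\exp(-cT^{2\beta+1})$ or smaller.

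This final bound is the main obstacle. For sublinear mean reversion, reaching level $\sim T$ essentially requires the Brownian motion to overcome a drift of size $T^\beta$. The heuristic large-deviation cost is then roughly $\int (\dot x+\alpha x^\beta)^2/2 \gtrsim T^{2\beta+1}$. We would make this rigorous via Girsanov and comparison with the stationary density $\propto e^{-2\alpha|x|^{\beta+1}/(\beta+1)}$, which gives $\mathbb P(|X_t|\ge y)\lesssim e^{-c y^{\beta+1}}$ uniformly in $t$, using a Lyapunov function such as $e^{\lambda|x|^{\beta+1}}$. With $y\sim T$ this yields $e^{-cT^{\beta+1}}$. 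That is weaker than $T^{2\beta+1}$, so we would instead bound the exponential moment directly by splitting on the size of $\sup|X|$. On the set where $\sup|X|\sim y$ is large, the running cost is still at least about $T^{\beta-1}\cdot T\cdot y^{\beta+1}$ whenever $y\gtrsim T$, and this can be traded against the probability.

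Finally, combining these estimates with Hölder's inequality to separate the correction term gives the order $T^{2\beta+1}$, which completes the proof of \eqref{eq:3.4}.
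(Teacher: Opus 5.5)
Parts (i) and (ii) follow the paper's route. In (i) you bound $c_T$ by the entropy of the unique martingale measure, which equals $\tfrac12\mathbb E^{Q^T}\int_0^T\theta_t^2\,dt$ with $S$ a $Q^T$-Brownian motion. In (ii) you apply It\^o's formula to $w_T(t)F_\varepsilon(S_t)$, using that $F_\varepsilon$ is bounded below and $F_\varepsilon''$ is bounded. Your true-martingale justification is fine: since $|X_t|\le2\sup_{s\le t}|B_s|$ and $2\beta<2$, Novikov even holds directly on $[0,T]$.

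Part (iii), however, has a genuine gap, and the route you sketch cannot close it. Off the good set the only control is $V_T\ge-B_\varepsilon T^{\beta+1}$. So you need a bad event of probability at most $e^{-cT^{2\beta+1}}$, outside of which $V_T\gtrsim T^{2\beta+1}$.

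The event $\{\sup_{s\le T}|X_s|\ge|\mu|T/8\}$ is far too large for this. The quasi-potential path $\dot x=\alpha x^\beta$ reaches level $\asymp T$ in time $\asymp T^{1-\beta}\ll T$ at action $\asymp T^{\beta+1}$. Hence this event has probability of order $e^{-cT^{\beta+1}}$, as you note yourself.

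Your repair, that on $\{\sup|X|\sim y\}$ the running cost is at least about $T^{\beta-1}\cdot T\cdot y^{\beta+1}$, is false. A large supremum says nothing about how long $|S|$ stays large:
\begin{itemize}
\item the cheapest excursion to level $y$ stays near $y$ only for time $\asymp y^{1-\beta}$;
\item more importantly, the dangerous paths are exactly those with $X_t\approx-\mu t$ on much of $[T/4,T/2]$. They have $\sup|X|\asymp T$ but $S\approx0$, and hence a small running cost.
\end{itemize}
Splitting on the size of $\sup|X|$ cannot separate these rare cancelling paths from the typical paths in $\{\sup|X|\gtrsim T\}$. It therefore yields at best order $\beta+1$. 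The same objection applies to your pointwise-in-time tail bounds.

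The missing ingredient is an exponential moment for the occupation functional $\Xi_t=\int_0^t|X_s|^{2\beta}\,ds$, rather than for marginals or suprema of $X$.
\begin{itemize}
\item Take $W=e^{\lambda V}$ with $V\approx1+a|x|^{\beta+1}$, the very Lyapunov function you mention. For small $\lambda$ it satisfies $(L+\delta_0|x|^{2\beta})W\le c_0W$.
\item A stopped supermartingale argument then gives $\mathbb E e^{\delta_0\Xi_t}\le C_0e^{c_0t}$ (Lemma~7.1).
\item Take the bad event $G_T^c=\{\int_0^{T/2}|S_t|^{\beta+1}dt<\kappa T^{\beta+2}\}$. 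On it, $|S_t|\le qT$ with $q=|\mu|/8$ on a subset of $[T/4,T/2]$ of measure $>3T/16$. There $|X_t|\ge|\mu|t-|S_t|\ge qT$.
\item Hence $\Xi_{T/2}\ge\gamma T^{2\beta+1}$ on $G_T^c$, and $\mathbb P(G_T^c)\le C_0e^{c_0T/2-\delta_0\gamma T^{2\beta+1}}$. This absorbs the factor $e^{B_\varepsilon T^{\beta+1}}$.
\item On $G_T$ one has $V_T\ge aT^{2\beta+1}$ for large $T$.
\end{itemize}
This time-averaged event is exactly what your heuristic of fighting a drift of size $T^\beta$ for a time of order $T$ prices at $T^{2\beta+1}$.
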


\section{Upper bound by entropy duality}
\label{sec:4}
Fix $0<T<\infty$ and recall the process $\theta$ from \eqref{eq:2.3}.
Define the stochastic exponential
\begin{equation}\label{eq:4.1}
Z_t^T:=\exp\left\{-\int_0^t\theta_s\,dB_s-\frac12\int_0^t\theta_s^2\,ds\right\},
\qquad0\leq t\leq T.
\end{equation}
By \cite[Proposition A.1]{ref6}, $Z^T$ is a true martingale. We may therefore define a probability
measure $Q^T$ on $\mathcal F_T$ by
\begin{equation}\label{eq:4.2}
\frac{dQ^T}{d\mathbb P}=Z_T^T.
\end{equation}
Girsanov's theorem gives that
\begin{equation}\label{eq:4.3}
\widehat B_t^T:=B_t+\int_0^t\theta_s\,ds,\qquad0\leq t\leq T,
\end{equation}
is a $Q^T$-Brownian motion. Since $dS_t=\theta_t\,dt+dB_t=d\widehat B_t^T$, the process $S$ is a $Q^T$-Brownian
motion, and $Q^T$ is an equivalent martingale measure (the only one).

The weak duality inequality for exponential utility says that, for every $H\in\mathcal A_T$,
\begin{equation}\label{eq:4.4}
-\log\mathbb E^{\mathbb P}e^{-V_T^H}\leq\operatorname{Ent}(Q^T\mid\mathbb P),
\qquad \operatorname{Ent}(Q^T\mid\mathbb P):=\mathbb E^{Q^T}\log Z_T^T.
\end{equation}
This is the weak side of the minimal-entropy martingale-measure duality for exponential utility;
see \cite{ref4,ref5}. For completeness, and because only the weak inequality is needed here, we
recall the short proof. Since $d\mathbb P/dQ^T=(Z_T^T)^{-1}$,
\[
\begin{aligned}
\log\mathbb E^{\mathbb P}e^{-V_T^H}
&=\log\mathbb E^{Q^T}\exp\{-V_T^H-\log Z_T^T\}\\
&\geq-\mathbb E^{Q^T}V_T^H-\mathbb E^{Q^T}\log Z_T^T
\geq-\operatorname{Ent}(Q^T\mid\mathbb P).
\end{aligned}
\]
The last inequality uses the supermartingale property of $V^H$ under $Q^T$ and the zero initial
wealth. Hence \eqref{eq:4.4} follows. Taking the supremum over $H\in\mathcal A_T$ gives
\[
c_T\leq\operatorname{Ent}(Q^T\mid\mathbb P).
\]

It remains to estimate the entropy. Under $Q^T$, $S_t=\widehat B_t^T$ and therefore
\[
X_t=S_t-\mu t=\widehat B_t^T-\mu t.
\]
Consequently,
\[
\begin{aligned}
\mathbb E^{Q^T}\int_0^T\theta_t^2\,dt
&\leq2\mu^2T+2\alpha^2\int_0^T\mathbb E^{Q^T}|X_t|^{2\beta}\,dt\\
&\leq2\mu^2T+2\alpha^2C_\beta\int_0^T
\left(M_{2\beta}t^\beta+|\mu|^{2\beta}t^{2\beta}\right)\,dt<\infty,
\end{aligned}
\]
where $M_{2\beta}$ is the absolute moment of order $2\beta$ of a standard Gaussian random variable and
$C_\beta:=\max\{1,2^{2\beta-1}\}$. The second inequality follows from $X_t=\widehat B_t^T-\mu t$ and
\[
|x+y|^{2\beta}\leq C_\beta\left(|x|^{2\beta}+|y|^{2\beta}\right),\qquad x,y\in\mathbb R.
\]
Thus the process $\left(\int_0^t\theta_s\,d\widehat B_s^T\right)_{0\leq t\leq T}$ is a square-integrable $Q^T$-martingale with zero mean.
Since $dB_t=d\widehat B_t^T-\theta_t\,dt$,
\[
\log Z_T^T=-\int_0^T\theta_t\,d\widehat B_t^T+\frac12\int_0^T\theta_t^2\,dt,
\]
and hence
\[
\operatorname{Ent}(Q^T\mid\mathbb P)=\frac12\mathbb E^{Q^T}\int_0^T\theta_t^2\,dt.
\]
Using $(a-b)^2\leq2a^2+2b^2$, we obtain
\begin{equation}\label{eq:4.5}
\operatorname{Ent}(Q^T\mid\mathbb P)\leq\mu^2T+\alpha^2\int_0^T\mathbb E^{Q^T}|X_t|^{2\beta}\,dt.
\end{equation}
Set $p:=2\beta\in(0,2)$. Since $|x+y|^p\leq C_p(|x|^p+|y|^p)$ and
$\mathbb E^{Q^T}|\widehat B_t^T|^p=C_pt^{p/2}=C_pt^\beta$, we have
\[
\mathbb E^{Q^T}|X_t|^{2\beta}\leq C_{\beta,\mu}(t^\beta+t^{2\beta}).
\]
Therefore
\begin{equation}\label{eq:4.6}
\int_0^T\mathbb E^{Q^T}|X_t|^{2\beta}\,dt\leq C_{\beta,\mu}(T^{\beta+1}+T^{2\beta+1}).
\end{equation}
Combining \eqref{eq:4.5} and \eqref{eq:4.6}, and using $T=O(T^{2\beta+1})$ and $T^{\beta+1}=O(T^{2\beta+1})$ for $\beta>0$, yields
\begin{equation}\label{eq:4.7}
c_T\leq\operatorname{Ent}(Q^T\mid\mathbb P)\leq C_{\alpha,\beta,\mu}T^{2\beta+1},\qquad T\geq1.
\end{equation}
It follows directly from \eqref{eq:4.4} and \eqref{eq:4.7} that no admissible family can have order larger than
$2\beta+1$.

\section{Construction of the smoothed power}
For $0<\beta<1$, the function $x\mapsto\operatorname{sgn}(x)|x|^\beta$
is not $C^2$ at the origin. We therefore replace it on a fixed interval $[-\varepsilon,\varepsilon]$ and record the
properties needed later.

\begin{lemma}[Smoothed power and primitive]\label{lem:5.1}
Fix $0<\beta<1$ and $\varepsilon>0$. Define the odd
function $\varphi_\varepsilon$ by
\begin{equation}\label{eq:5.1}
\varphi_\varepsilon(x)=Ax+Bx^3+Cx^5\quad (|x|\leq\varepsilon),
\qquad
\varphi_\varepsilon(x)=\operatorname{sgn}(x)|x|^\beta\quad (|x|\geq\varepsilon),
\end{equation}
where
\begin{equation}\label{eq:5.2}
A=\frac{15-8\beta+\beta^2}{8}\varepsilon^{\beta-1},
\qquad
B=\frac{(\beta-1)(5-\beta)}{4}\varepsilon^{\beta-3},
\qquad
C=\frac{(\beta-1)(\beta-3)}{8}\varepsilon^{\beta-5}.
\end{equation}
Then $\varphi_\varepsilon\in C^2(\mathbb{R})$. If
\begin{equation}\label{eq:5.3}
F_\varepsilon(x):=\int_0^x\varphi_\varepsilon(u)\,du,
\end{equation}
then $F_\varepsilon\in C^3(\mathbb{R})$, $F_\varepsilon$ is even, and
\begin{equation}\label{eq:5.4}
F_\varepsilon(x)=\frac{A}{2}x^2+\frac{B}{4}x^4+\frac{C}{6}x^6,
\qquad |x|\leq\varepsilon,
\end{equation}
while, for $|x|\geq\varepsilon$,
\begin{equation}\label{eq:5.5}
F_\varepsilon(x)=\frac{|x|^{\beta+1}}{\beta+1}+D_\varepsilon,
\qquad
D_\varepsilon=\varepsilon^{\beta+1}\frac{(\beta-5)(\beta-3)(\beta-1)}{48(\beta+1)}<0.
\end{equation}
Moreover, there are finite constants $K_\varepsilon$ and $M_\varepsilon$ such that, for all $x\in\mathbb{R}$,
\begin{equation}\label{eq:5.6}
F_\varepsilon(x)\geq0,
\qquad
F_\varepsilon(x)\geq\frac{|x|^{\beta+1}}{\beta+1}-K_\varepsilon,
\qquad
\|F_\varepsilon''\|_\infty\leq M_\varepsilon.
\end{equation}
\end{lemma}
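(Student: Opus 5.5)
The proof is a direct verification, organized around the matching conditions at $x=\varepsilon$; oddness of $\varphi_\varepsilon$ handles $x=-\varepsilon$ automatically. Both pieces of $\varphi_\varepsilon$ are smooth on their open domains. So $\varphi_\varepsilon\in C^2(\mathbb R)$ is equivalent to three equations at $x=\varepsilon$:
\[
A\varepsilon+B\varepsilon^3+C\varepsilon^5=\varepsilon^\beta,\qquad
A+3B\varepsilon^2+5C\varepsilon^4=\beta\varepsilon^{\beta-1},
\]
\[
6B\varepsilon+20C\varepsilon^3=\beta(\beta-1)\varepsilon^{\beta-2}.
\]
Writing $A=a\varepsilon^{\beta-1}$, $B=b\varepsilon^{\beta-3}$, $C=c\varepsilon^{\beta-5}$ removes $\varepsilon$ and leaves the linear system
\[
a+b+c=1,\qquad a+3b+5c=\beta,\qquad 6b+20c=\beta(\beta-1).
\]
I will check that the coefficients in \eqref{eq:5.2} solve it. For instance, $6b+20c=\tfrac{(\beta-1)}{4}\bigl(6(5-\beta)+10(\beta-3)\bigr)=\beta(\beta-1)$, and the other two equations are similar one-line computations. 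Oddness is clear since only odd powers appear.

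Next come the properties of $F_\varepsilon$. Since $\varphi_\varepsilon\in C^2$, the primitive $F_\varepsilon$ is $C^3$. It is even because $\varphi_\varepsilon$ is odd and $F_\varepsilon(0)=0$. Formula \eqref{eq:5.4} follows by termwise integration. For $x\ge\varepsilon$ I write $F_\varepsilon(x)=F_\varepsilon(\varepsilon)+\frac{x^{\beta+1}-\varepsilon^{\beta+1}}{\beta+1}$. This gives
\[
D_\varepsilon=\varepsilon^{\beta+1}\Bigl(\tfrac a2+\tfrac b4+\tfrac c6-\tfrac1{\beta+1}\Bigr),
\]
and I will simplify this to the stated factored form. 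The factorization is plausible because $D_\varepsilon$ must vanish at $\beta=1$, where $\varphi_\varepsilon(x)=x$ exactly; the other roots $\beta=3,5$ I will confirm by expanding. For $0<\beta<1$ the factors have signs $(-)(-)(-)$, so $D_\varepsilon<0$.

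Finally I prove \eqref{eq:5.6}. The bound $\|F_\varepsilon''\|_\infty=\|\varphi_\varepsilon'\|_\infty<\infty$ holds because $\varphi_\varepsilon'$ is continuous on $[-\varepsilon,\varepsilon]$ and equals $\beta|x|^{\beta-1}\le\beta\varepsilon^{\beta-1}$ outside. The second inequality holds with $K_\varepsilon:=\max\bigl(-D_\varepsilon,\ \varepsilon^{\beta+1}/(\beta+1)\bigr)$: outside the interval it is equality up to $D_\varepsilon$, and inside it uses $F_\varepsilon\ge0$ together with $|x|^{\beta+1}\le\varepsilon^{\beta+1}$.

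The main obstacle is $F_\varepsilon\ge0$, which I will deduce from $x\varphi_\varepsilon(x)\ge0$. By evenness it suffices to show $\varphi_\varepsilon\ge0$ on $[0,\infty)$; on $[\varepsilon,\infty)$ this is trivial. On $[0,\varepsilon]$ I substitute $x=\varepsilon s$, so that $\varphi_\varepsilon(x)=\varepsilon^\beta s\,q(s^2)$ with $q(y)=a+by+cy^2$. Since $a>0$ (as $15-8\beta+\beta^2>0$), $b<0$, and $c>0$, the quadratic $q$ is convex in $y$ with $q(0)=a>0$ and $q(1)=1>0$. It remains to rule out a root in $(0,1)$. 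I will check that the vertex $y^*=-b/(2c)=\frac{5-\beta}{3-\beta}>1$, so $q$ is decreasing on $[0,1]$ and hence $q\ge q(1)=1$ there. Then $\varphi_\varepsilon\ge0$ on $[0,\varepsilon]$, and $F_\varepsilon(x)=\int_0^{|x|}\varphi_\varepsilon\ge0$.
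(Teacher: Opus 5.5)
Your proposal is correct and follows the paper's proof essentially step for step: the same three $C^2$ matching conditions at $x=\varepsilon$, the same substitution $y=(x/\varepsilon)^2$ reducing $\varphi_\varepsilon\ge0$ on $[0,\varepsilon]$ to $q\ge q(1)=1$ on $[0,1]$, and the same treatment of $F_\varepsilon''$ and of the lower bound. The differences are cosmetic. You locate the vertex $\frac{5-\beta}{3-\beta}>1$ where the paper instead bounds $q'(y)\le\frac{\beta-1}{2}<0$. You also verify the coefficients and the factored form of $D_\varepsilon$, and give an explicit $K_\varepsilon$, where the paper simply asserts these or appeals to compactness.
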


\begin{proof}
The coefficients in \eqref{eq:5.2} are obtained by imposing $C^2$ matching at $x=\varepsilon$:
\begin{align}
A\varepsilon+B\varepsilon^3+C\varepsilon^5&=\varepsilon^\beta,\label{eq:5.7}\\
A+3B\varepsilon^2+5C\varepsilon^4&=\beta\varepsilon^{\beta-1},\label{eq:5.8}\\
6B\varepsilon+20C\varepsilon^3&=\beta(\beta-1)\varepsilon^{\beta-2}.\label{eq:5.9}
\end{align}
Oddness gives the same matching at $x=-\varepsilon$, so $\varphi_\varepsilon\in C^2(\mathbb{R})$. The formula \eqref{eq:5.4} follows by
integration on $[-\varepsilon,\varepsilon]$. For $|x|\geq\varepsilon$,
\[
F_\varepsilon(x)=F_\varepsilon(\varepsilon)+\int_\varepsilon^{|x|}u^\beta\,du
=\frac{|x|^{\beta+1}}{\beta+1}+D_\varepsilon,
\]
with $D_\varepsilon$ as in \eqref{eq:5.5}. This additive constant is important: the primitive does not equal
$|x|^{\beta+1}/(\beta+1)$ outside the smoothing interval unless the inner spline is chosen differently.

It remains to prove nonnegativity and boundedness of the second derivative. For $0\leq x\leq\varepsilon$,
set $y=(x/\varepsilon)^2$. Then
\[
\varphi_\varepsilon(x)=x\varepsilon^{\beta-1}q(y),
\]
where
\[
q(y)=\frac{15-8\beta+\beta^2}{8}
+\frac{(\beta-1)(5-\beta)}{4}y
+\frac{(\beta-1)(\beta-3)}{8}y^2.
\]
We have $q(1)=1$, and for $0\leq y\leq1$,
\begin{equation}\label{eq:5.10}
q'(y)=\frac{(\beta-1)(5-\beta)}{4}
+\frac{(\beta-1)(\beta-3)}{4}y
\leq\frac{\beta-1}{2}<0.
\end{equation}
Hence $q(y)\geq q(1)=1$, so $\varphi_\varepsilon(x)\geq0$ for $x\geq0$. Since $\varphi_\varepsilon$ is odd, $F_\varepsilon$ is even and nonnegative.
The lower bound in \eqref{eq:5.6} follows from \eqref{eq:5.5} outside $[-\varepsilon,\varepsilon]$ and from compactness on $[-\varepsilon,\varepsilon]$.
Finally, $F_\varepsilon''=\varphi_\varepsilon'$ is bounded on $[-\varepsilon,\varepsilon]$ by the polynomial formula, while for $|x|\geq\varepsilon$,
\begin{equation}\label{eq:5.11}
|F_\varepsilon''(x)|=\beta|x|^{\beta-1}\leq\beta\varepsilon^{\beta-1}.
\end{equation}
Thus $\|F_\varepsilon''\|_\infty<\infty$.
\end{proof}

\section{It\^o identity and admissibility of the candidate}
The next lemma is the analytic core
of the construction.

\begin{lemma}[Terminal wealth decomposition and admissibility]\label{lem:6.1}
Fix $\varepsilon>0$ and $T>0$, and
write $\varphi=\varphi_\varepsilon$ and $F=F_\varepsilon$. Define
\begin{equation}\label{eq:6.1}
H_t^{\beta,\varepsilon,T}:=w_T(t)\varphi(S_t),
\qquad 0\leq t\leq T,
\end{equation}
where
\begin{equation}\label{eq:6.2}
w_T(t):=(T+1-t)^\beta-1.
\end{equation}
Then $H^{\beta,\varepsilon,T}\in\mathcal{A}_T$. Its terminal wealth admits the decomposition
\begin{equation}\label{eq:6.3}
V_T^{\beta,\varepsilon}:=(H^{\beta,\varepsilon,T}\cdot S)_T=I_1(T)-I_2(T),
\end{equation}
where
\begin{equation}\label{eq:6.4}
I_1(T):=\beta\int_0^T(T+1-t)^{\beta-1}F(S_t)\,dt
\end{equation}
and
\begin{equation}\label{eq:6.5}
I_2(T):=\frac{1}{2}\int_0^Tw_T(t)F''(S_t)\,dt.
\end{equation}
Moreover,
\begin{equation}\label{eq:6.6}
|I_2(T)|\leq B_\varepsilon T^{\beta+1},
\qquad
B_\varepsilon:=\frac{1}{2(\beta+1)}\|F''\|_\infty.
\end{equation}
\end{lemma}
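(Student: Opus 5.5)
The natural approach is to apply Itô's formula to the process $G(t,S_t):=w_T(t)F(S_t)$ on $[0,T]$, exploiting that $F=F_\varepsilon\in C^3(\mathbb R)$ by Lemma~\ref{lem:5.1} and that $w_T$ is smooth on $[0,T]$. Since $S$ is a continuous semimartingale with $d\langle S\rangle_t=dt$, we get
\[
d\bigl(w_T(t)F(S_t)\bigr)=w_T'(t)F(S_t)\,dt+w_T(t)\varphi(S_t)\,dS_t+\tfrac12 w_T(t)F''(S_t)\,dt .
\]
Here $w_T'(t)=-\beta(T+1-t)^{\beta-1}$. The key observation is that $w_T(T)=1^\beta-1=0$ and $S_0=0$ with $F(0)=0$, so the left-hand side integrates to zero. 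Rearranging gives
\[
(H^{\beta,\varepsilon,T}\cdot S)_T=\beta\int_0^T(T+1-t)^{\beta-1}F(S_t)\,dt-\tfrac12\int_0^T w_T(t)F''(S_t)\,dt=I_1(T)-I_2(T),
\]
which is \eqref{eq:6.3}. The stochastic integral is well defined, and $H$ is predictable and $S$-integrable, because $H$ is continuous and adapted: $\varphi$ is continuous and $S$ has continuous paths.

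For the bound \eqref{eq:6.6}, note $0\le w_T(t)\le (T+1-t)^\beta$. Then
\[
|I_2(T)|\le\tfrac12\|F''\|_\infty\int_0^T(T+1-t)^\beta dt\le \tfrac{\|F''\|_\infty}{2(\beta+1)}(T+1)^{\beta+1}.
\]
This yields the constant as stated, but with $(T+1)^{\beta+1}$ in place of $T^{\beta+1}$. To obtain exactly $T^{\beta+1}$, I would use the sharper estimate
\[
\int_0^T\bigl((T+1-t)^\beta-1\bigr)\,dt=\frac{(T+1)^{\beta+1}-1}{\beta+1}-T .
\]
The right-hand side is at most $T^{\beta+1}/(\beta+1)$, since the function $g(T)=T^{\beta+1}-(T+1)^{\beta+1}+1+(\beta+1)T$ satisfies $g(0)=0$ and $g'(T)=(\beta+1)\bigl(T^\beta-(T+1)^\beta+1\bigr)\ge0$ by subadditivity of $x\mapsto x^\beta$.

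Admissibility then follows from the same identity applied at every intermediate time $s\in[0,T]$:
\[
V_s^H=w_T(s)F(S_s)+\int_0^s\beta(T+1-t)^{\beta-1}F(S_t)\,dt-\tfrac12\int_0^s w_T(t)F''(S_t)\,dt .
\]
The first two terms are nonnegative, by $F\ge0$ from \eqref{eq:5.6} and $w_T\ge0$. The last term has absolute value at most $B_\varepsilon T^{\beta+1}$ by the computation above, since its integrand is dominated for all $s\le T$. Hence $V^H_s\ge -B_\varepsilon T^{\beta+1}$ for all $s\in[0,T]$, which is a deterministic constant lower bound, so $H\in\mathcal A_T$.

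The main (minor) obstacle is verifying the applicability of Itô's formula, which requires only $F\in C^2$ in space and $C^1$ in time, both of which hold. The other point needing care is that the sign $w_T\ge0$ holds on $[0,T]$, which is where the specific shift $-1$ in $w_T$ matters.
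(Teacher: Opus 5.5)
Your proof is correct and follows the paper's argument: Itô's formula for $w_T(t)F(S_t)$ with vanishing boundary terms gives the decomposition, subadditivity of $x\mapsto x^\beta$ gives the bound on $I_2(T)$, and admissibility follows from the same identity on $[0,s]$ together with $F\ge0$ and $w_T\ge0$. The only cosmetic difference is in bounding $\int_0^T w_T(t)\,dt$: you evaluate it in closed form and verify the inequality by differentiating, whereas the paper applies $(1+s)^\beta-1\le s^\beta$ pointwise under the integral, which is slightly shorter.
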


\begin{proof}
The process $H^{\beta,\varepsilon,T}$ is predictable because it is continuous and adapted, and it is
$S$-integrable on finite horizons because $\varphi(S)$ is locally bounded. Let
\[
u(t,x):=w_T(t)F(x),\qquad 0\leq t\leq T.
\]
Then $u\in C^{1,2}([0,T]\times\mathbb{R})$, $w_T(T)=0$, and $F(S_0)=F(0)=0$. It\^o's formula on $[0,T]$ gives
\[
du(t,S_t)=-\beta(T+1-t)^{\beta-1}F(S_t)\,dt+w_T(t)\varphi(S_t)\,dS_t
+\tfrac{1}{2}w_T(t)F''(S_t)\,dt.
\]
The boundary terms vanish, and hence
\begin{equation}\label{eq:6.7}
0=-\beta\int_0^T(T+1-t)^{\beta-1}F(S_t)\,dt
+\int_0^Tw_T(t)\varphi(S_t)\,dS_t
+\frac{1}{2}\int_0^Tw_T(t)F''(S_t)\,dt.
\end{equation}
This gives \eqref{eq:6.3}--\eqref{eq:6.5}. The bound \eqref{eq:6.6} follows from Lemma~\ref{lem:5.1} and the subadditivity inequality $(1+s)^\beta\leq1+s^\beta$, valid for $s\geq0$ and $0<\beta<1$:
\begin{align*}
|I_2(T)|&\leq\frac{1}{2}\|F''\|_\infty\int_0^T\bigl((T+1-t)^\beta-1\bigr)\,dt\\
&=\frac{1}{2}\|F''\|_\infty\int_0^T\bigl((1+s)^\beta-1\bigr)\,ds
\leq\frac{1}{2(\beta+1)}\|F''\|_\infty T^{\beta+1}.
\end{align*}
It remains to show admissibility. The same calculation on $[0,r]$, $0\leq r\leq T$, gives
\begin{align*}
(H^{\beta,\varepsilon,T}\cdot S)_r
&=w_T(r)F(S_r)+\beta\int_0^r(T+1-t)^{\beta-1}F(S_t)\,dt\\
&\quad-\frac{1}{2}\int_0^rw_T(t)F''(S_t)\,dt.
\end{align*}
The first two terms are nonnegative by Lemma~\ref{lem:5.1}, and the last term is bounded below by
$-B_\varepsilon T^{\beta+1}$. Hence
\begin{equation}\label{eq:6.8}
(H^{\beta,\varepsilon,T}\cdot S)_r\geq-B_\varepsilon T^{\beta+1},
\qquad 0\leq r\leq T.\qedhere
\end{equation}
\end{proof}

\section{Exponential moment for the mean-reverting factor}
The following lemma is a standard Foster--Lyapunov-type consequence of a generator drift inequality.

\begin{lemma}[Exponential moment of the occupation functional]\label{lem:7.1}
Let
\begin{equation}\label{eq:7.1}
\Xi_t:=\int_0^t|X_s|^{2\beta}\,ds.
\end{equation}
There are constants $\delta_0,c_0,C_0>0$, depending only on $\alpha$ and $\beta$, such that, for every $t\geq0$ and
every tail level $r>0$,
\begin{equation}\label{eq:7.2}
\mathbb{E}\exp\{\delta_0\Xi_t\}\leq C_0e^{c_0t},
\qquad
\mathbb{P}(\Xi_t\geq r)\leq C_0\exp\{c_0t-\delta_0r\}.
\end{equation}
\end{lemma}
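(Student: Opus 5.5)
We use an exponential Lyapunov function $V(x)=\exp\{\lambda G(x)\}$. The natural choice is $G(x)=\frac{|x|^{\beta+1}}{\beta+1}$, smoothed near the origin; we take $G=F_\varepsilon$ from Lemma~\ref{lem:5.1} with, say, $\varepsilon=1$. Then $G\in C^3$, $G'=\varphi_1$, $\|G''\|_\infty\le M_1$ and $G\ge 0$. The generator of $X$ is $\mathcal L f=-\alpha\operatorname{sgn}(x)|x|^\beta f'+\tfrac12 f''$. For $V=e^{\lambda G}$ this gives
\[
\frac{\mathcal L V}{V}=-\alpha\lambda\operatorname{sgn}(x)|x|^\beta\varphi_1(x)+\frac{\lambda}{2}G''(x)+\frac{\lambda^2}{2}\varphi_1(x)^2 .
\]
For $|x|\ge1$ the first term is $-\alpha\lambda|x|^{2\beta}$ and the last is $\frac{\lambda^2}{2}|x|^{2\beta}$. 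Choosing $\lambda=\alpha$ (any $\lambda<2\alpha$ works), the right-hand side is at most $-\frac{\alpha^2}{2}|x|^{2\beta}+\frac{\alpha}{2}M_1$ there. On $[-1,1]$ everything is bounded. Hence there are constants $\delta_0=\alpha^2/2$ and $c_0$, depending only on $\alpha,\beta$, with
\[
\mathcal L V(x)+\delta_0|x|^{2\beta}V(x)\le c_0V(x),\qquad x\in\mathbb R.
\]

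Next apply Itô's formula to $Y_t:=V(X_t)\exp\{\delta_0\Xi_t-c_0t\}$:
\[
dY_t=e^{\delta_0\Xi_t-c_0t}\bigl(\mathcal LV+\delta_0|X_t|^{2\beta}V-c_0V\bigr)(X_t)\,dt+dN_t ,
\]
where $N$ is a local martingale. So $Y$ is a nonnegative local supermartingale, hence a true supermartingale by Fatou. It follows that $\mathbb E Y_t\le Y_0=V(0)=1$, since $G(0)=0$. Because $V\ge1$ (as $G\ge0$), we get $\mathbb E\exp\{\delta_0\Xi_t\}\le e^{c_0t}$. This is the first bound in \eqref{eq:7.2} with $C_0=1$, and the constants depend only on $\alpha,\beta$. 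To be careful we localize with $\tau_n=\inf\{t:|X_t|\ge n\}$ and apply Fatou as $n\to\infty$; non-explosion is known from the strong existence result.

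The tail bound then follows from Markov's inequality:
\[
\mathbb P(\Xi_t\ge r)\le e^{-\delta_0 r}\,\mathbb E e^{\delta_0\Xi_t}\le C_0e^{c_0t-\delta_0r}.
\]

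The only real obstacle is the drift inequality. One must check that the quadratic gradient term $\frac{\lambda^2}{2}\varphi_1^2$ is dominated by the mean-reversion term $\alpha\lambda|x|^{2\beta}$. This is exactly what forces $\lambda<2\alpha$, and it is why $G$ must grow like $|x|^{\beta+1}$: with that growth, $G'^2$ and the drift contribution $x^\beta G'$ are of the same order $|x|^{2\beta}$. The bounded-$G''$ property from Lemma~\ref{lem:5.1} handles the second-order term, including near the origin where the unsmoothed power would fail to be $C^2$.
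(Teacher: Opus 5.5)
Your proof is correct and follows essentially the same route as the paper. Both arguments use an exponential Lyapunov function $e^{\lambda G}$, with $G$ a version of $|x|^{\beta+1}$ smoothed near the origin, to get the drift inequality $(\mathcal L+\delta_0|x|^{2\beta})e^{\lambda G}\le c_0e^{\lambda G}$; they then apply Itô's formula with localization at $\tau_n$ and Fatou's lemma, and finish with Markov's inequality. The differences are cosmetic: you reuse $F_1$ from Lemma~\ref{lem:5.1} as the smoothing and compute the drift exactly, which shows that any $\lambda<2\alpha$ works and gives $C_0=1$, while the paper smooths $1+a|x|^{\beta+1}$ abstractly and takes $\lambda$ small.
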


\begin{proof}
Let
\[
Lf(x)=-\alpha\operatorname{sgn}(x)|x|^\beta f'(x)+\frac{1}{2}f''(x)
\]
be the generator of the diffusion process $X$. Choose an even $C^2$ function $V\geq1$ such that
$V(x)=1+a|x|^{\beta+1}$ for $|x|\geq2$, with the function smoothly modified on $[-2,2]$. This smoothing
avoids the singularity of the second derivative at the origin. For $|x|\geq2$,
\[
V'(x)=a(\beta+1)\operatorname{sgn}(x)|x|^\beta,
\qquad
V''(x)=a\beta(\beta+1)|x|^{\beta-1},
\]
and therefore
\[
LV(x)=-\alpha a(\beta+1)|x|^{2\beta}+\frac{1}{2}a\beta(\beta+1)|x|^{\beta-1}.
\]
Since $2\beta>\beta-1$, and compact sets can be absorbed into constants, there are constants
$c_1,c_2,c_3,c_4>0$ such that, for all $x\in\mathbb{R}$,
\[
LV(x)\leq-c_1|x|^{2\beta}+c_2,
\qquad
(V'(x))^2\leq c_3|x|^{2\beta}+c_4.
\]
For $W_\lambda=e^{\lambda V}$,
\[
\frac{LW_\lambda}{W_\lambda}=\lambda LV+\frac{\lambda^2}{2}(V')^2
\leq-\left(\lambda c_1-\frac{\lambda^2c_3}{2}\right)|x|^{2\beta}
+\lambda c_2+\frac{\lambda^2c_4}{2}.
\]
Choose $\lambda>0$ so small that
\[
\delta_0:=\lambda c_1-\frac{\lambda^2c_3}{2}>0,
\qquad
c_0:=\lambda c_2+\frac{\lambda^2c_4}{2}.
\]
Then
\begin{equation}\label{eq:7.3}
(L+\delta_0|x|^{2\beta})W_\lambda(x)\leq c_0W_\lambda(x),
\qquad x\in\mathbb{R}.
\end{equation}
Let $\tau_n:=\inf\{t\geq0:|X_t|\geq n\}$. Since the SDE has a nonexplosive strong solution, $\tau_n\uparrow\infty$
a.s. Applying It\^o's formula to the stopped process
\[
\exp\left\{\delta_0\int_0^{t\wedge\tau_n}|X_s|^{2\beta}\,ds-c_0(t\wedge\tau_n)\right\}
W_\lambda(X_{t\wedge\tau_n})
\]
is legitimate, because all coefficients are bounded up to $\tau_n$. By \eqref{eq:7.3} its drift is nonpositive;
hence this stopped process is a nonnegative supermartingale. Since $W_\lambda\geq1$,
\[
\mathbb{E}\exp\left\{\delta_0\int_0^{t\wedge\tau_n}|X_s|^{2\beta}\,ds\right\}
\leq e^{c_0t}W_\lambda(0).
\]
Letting $n\to\infty$ and using Fatou's lemma gives $\mathbb{E}e^{\delta_0\Xi_t}\leq W_\lambda(0)e^{c_0t}$.

Thus \eqref{eq:7.2} holds with $C_0:=W_\lambda(0)$, and the tail estimate follows immediately from
Markov's inequality.
\end{proof}

\section{\texorpdfstring{Lower bound for $H^{\beta,\varepsilon,T}$ when $1/2\le\beta<1$}{Lower bound for H (β, ε, T) when 1/2 ≤ β < 1}}
\label{sec:8}

We first give a shorter argument available when $2\beta\ge1$, because it is close to the original proof of \cite{ref6}. Fix $\eta\in(0,1/8)$, say $\eta=1/16$, and define
\[
A_T:=\left\{\left|\int_0^{T/2}X_t\,dt\right|\le\eta|\mu|T^2\right\}.
\]
On $A_T$,
\[
\left|\int_0^{T/2}S_t\,dt\right|
=\left|\frac{\mu T^2}{8}+\int_0^{T/2}X_t\,dt\right|
\ge(1/8-\eta)|\mu|T^2\ge\frac{|\mu|}{16}T^2.
\]
Jensen's inequality for the convex function $x\mapsto|x|^{\beta+1}$ gives
\begin{equation}
\label{eq:8.1}
\int_0^{T/2}|S_t|^{\beta+1}\,dt
\ge\frac{T}{2}\left|\frac{2}{T}\int_0^{T/2}S_t\,dt\right|^{\beta+1}
\ge K_{\beta,\mu}T^{\beta+2},
\end{equation}
where
\[
K_{\beta,\mu}:=\frac{1}{2}\left(\frac{|\mu|}{8}\right)^{\beta+1}.
\]
Since $(T+1-t)^{\beta-1}\ge(T+1)^{\beta-1}\ge2^{\beta-1}T^{\beta-1}$ for $0\le t\le T/2$ and $T\ge1$,
\[
\begin{aligned}
I_1(T)&\ge\beta2^{\beta-1}T^{\beta-1}\int_0^{T/2}F(S_t)\,dt\\
&\ge\frac{\beta}{\beta+1}\,2^{\beta-1}T^{\beta-1}\int_0^{T/2}|S_t|^{\beta+1}\,dt-\beta2^{\beta-2}K_\varepsilon T^\beta\\
&\ge\frac{\beta}{\beta+1}\,2^{\beta-1}K_{\beta,\mu}T^{2\beta+1}-\beta2^{\beta-2}K_\varepsilon T^\beta
\end{aligned}
\]
on $A_T$. Together with \eqref{eq:6.6}, this implies that for some $a>0$ and all sufficiently large $T$,
\begin{equation}
\label{eq:8.2}
V_T^{\beta,\varepsilon}=I_1(T)-I_2(T)\ge aT^{2\beta+1}\qquad\text{on }A_T.
\end{equation}
Consequently, for all sufficiently large $T$
\begin{equation}
\label{eq:8.3}
\mathbb{E}\bigl[e^{-V_T^{\beta,\varepsilon}}\mathbf{1}_{A_T}\bigr]\le e^{-aT^{2\beta+1}}.
\end{equation}

It remains to bound the probability of $A_T^c:=\Omega\setminus A_T$. If $\beta\ge1/2$, then $x\mapsto|x|^{2\beta}$ is convex, so Jensen's inequality applied to the average over $[0,T/2]$ yields
\[
\left|\frac{2}{T}\int_0^{T/2}X_t\,dt\right|^{2\beta}
\le\frac{2}{T}\int_0^{T/2}|X_t|^{2\beta}\,dt.
\]
On $A_T^c$, the left-hand side is larger than $(2\eta|\mu|T)^{2\beta}$. Therefore
\begin{equation}
\label{eq:8.4}
A_T^c\subseteq\left\{\int_0^{T/2}|X_t|^{2\beta}\,dt\ge c_{\mu,\eta}T^{2\beta+1}\right\},
\qquad c_{\mu,\eta}:=\frac{1}{2}(2\eta|\mu|)^{2\beta}>0.
\end{equation}
By Lemma 7.1,
\[
\mathbb{P}(A_T^c)\le C_0\exp\left\{\frac{c_0}{2}T-\delta_0c_{\mu,\eta}T^{2\beta+1}\right\}\le Ce^{-bT^{2\beta+1}}
\]
for all sufficiently large $T$, because $2\beta+1>1$. Since $F\ge0$, we have $I_1(T)\ge0$, and \eqref{eq:6.6} gives $e^{-V_T^{\beta,\varepsilon}}\le e^{B_\varepsilon T^{\beta+1}}$. Hence
\begin{equation}
\label{eq:8.5}
\mathbb{E}\bigl[e^{-V_T^{\beta,\varepsilon}}\mathbf{1}_{A_T^c}\bigr]
\le e^{B_\varepsilon T^{\beta+1}}\mathbb{P}(A_T^c)\le Ce^{-b'T^{2\beta+1}}
\end{equation}
for all sufficiently large $T$. Combining the good-event estimate \eqref{eq:8.3} with the bad-event estimate \eqref{eq:8.5} gives
\begin{equation}
\label{eq:8.6}
\mathbb{E}e^{-V_T^{\beta,\varepsilon}}\le Ce^{-bT^{2\beta+1}},\qquad 1/2\le\beta<1,
\end{equation}
for suitable constants $C,b>0$.

\section{\texorpdfstring{Lower bound for $H^{\beta,\varepsilon,T}$ when $0<\beta<1/2$}{Lower bound for H (β, ε, T) when 0 < β < 1/2}}
\label{sec:9}

When $0<\beta<1/2$, the function $x\mapsto|x|^{2\beta}$ is not convex, so the preceding estimate for $A_T^c$, based on Jensen's inequality, is unavailable. We use instead an event where the averaged size of $S$ is large enough. This argument is also valid for $\beta\ge1/2$, but it is needed only in the present non-convex regime. Let
\[
q:=\frac{|\mu|}{8},\qquad\kappa:=\frac{q^{\beta+1}}{16},\qquad
G_T:=\left\{\int_0^{T/2}|S_t|^{\beta+1}\,dt\ge\kappa T^{\beta+2}\right\}.
\]
Assume $T\ge1$, then $(T+1-t)^{\beta-1}\ge(2T)^{\beta-1}$ holds hence on $G_T$ the lower bound \eqref{eq:5.6} gives
\[
\begin{aligned}
I_1(T)&\ge\beta2^{\beta-1}T^{\beta-1}\int_0^{T/2}F(S_t)\,dt\\
&\ge\frac{\beta}{\beta+1}2^{\beta-1}T^{\beta-1}\int_0^{T/2}|S_t|^{\beta+1}\,dt-\beta2^{\beta-2}K_\varepsilon T^\beta\\
&\ge\frac{\beta}{\beta+1}2^{\beta-1}\kappa T^{2\beta+1}-C_\varepsilon T^\beta.
\end{aligned}
\]
Combining this with \eqref{eq:6.6} and using $\beta+1<2\beta+1$, we obtain constants $a_1,T_1>0$ such that
\[
V_T^{\beta,\varepsilon}\ge a_1T^{2\beta+1}\qquad\text{on }G_T,\qquad T\ge T_1.
\]
Consequently,
\begin{equation}
\label{eq:9.1}
\mathbb{E}\bigl[e^{-V_T^{\beta,\varepsilon}}\mathbf{1}_{G_T}\bigr]\le e^{-a_1T^{2\beta+1}},\qquad T\ge T_1.
\end{equation}

We now show that $G_T^c:=\Omega\setminus G_T$ has exponentially small probability at the right scale. Set
\[
J_T:=[T/4,T/2],\qquad D_T:=\{t\in J_T:|S_t|>qT\}.
\]
We will denote by $\ell(A)$ the Lebesgue measure of a Borel set $A\subset\mathbb{R}$. On $G_T^c$,
\[
\ell(D_T)(qT)^{\beta+1}\le\int_{J_T}|S_t|^{\beta+1}\,dt<\kappa T^{\beta+2},
\]
so $\ell(D_T)<\kappa q^{-(\beta+1)}T=T/16$. Since $\ell(J_T)=T/4$, this implies $\ell(J_T\setminus D_T)>3T/16$. For $t\in J_T\setminus D_T$,
\[
|X_t|=|S_t-\mu t|\ge|\mu|t-|S_t|\ge\frac{|\mu|T}{4}-qT=qT.
\]
Consequently,
\begin{equation}
\label{eq:9.2}
G_T^c\subseteq\left\{\int_0^{T/2}|X_t|^{2\beta}\,dt\ge\gamma_{\mu,\beta}T^{2\beta+1}\right\},
\qquad\gamma_{\mu,\beta}:=\frac{3}{16}\left(\frac{|\mu|}{8}\right)^{2\beta}>0.
\end{equation}
Lemma 7.1 gives
\[
\mathbb{P}(G_T^c)\le C_0\exp\left\{\frac{c_0}{2}T-\delta_0\gamma_{\mu,\beta}T^{2\beta+1}\right\}\le Ce^{-bT^{2\beta+1}}
\]
for all sufficiently large $T$. Since $F\ge0$, $I_1(T)\ge0$, and hence $e^{-V_T^{\beta,\varepsilon}}\le e^{B_\varepsilon T^{\beta+1}}$, the contribution of $G_T^c$ satisfies
\begin{equation}
\label{eq:9.3}
\mathbb{E}\bigl[e^{-V_T^{\beta,\varepsilon}}\mathbf{1}_{G_T^c}\bigr]
\le e^{B_\varepsilon T^{\beta+1}}\mathbb{P}(G_T^c)\le Ce^{-b'T^{2\beta+1}}
\end{equation}
for all sufficiently large $T$. Combining the good-event estimate \eqref{eq:9.1} with the preceding bad-event estimate \eqref{eq:9.3} gives
\begin{equation}
\label{eq:9.4}
\mathbb{E}e^{-V_T^{\beta,\varepsilon}}\le Ce^{-bT^{2\beta+1}},\qquad 0<\beta<1/2.
\end{equation}
Thus the same lower bound holds in the non-convex regime, too.

\section{Completion of the proof}
\label{sec:10}

Equations \eqref{eq:8.6} and \eqref{eq:9.4} show that, for every $0<\beta<1$, every fixed $\varepsilon>0$, and all sufficiently large $T$, there are constants $C,b>0$ such that
\begin{equation}
\label{eq:10.1}
\mathbb{E}\exp\{-V_T^{\beta,\varepsilon}\}\le Ce^{-bT^{2\beta+1}}.
\end{equation}
Therefore
\[
\liminf_{T\to\infty}\frac{-\log\mathbb{E}\exp\{-V_T^{\beta,\varepsilon}\}}{T^{2\beta+1}}\ge b>0,
\]
and hence $C(\mathbb{H}^{\beta,\varepsilon})\ge2\beta+1$. The upper bound from section 4 gives $C(\mathbb{H})\le2\beta+1$ for every admissible family $\mathbb{H}$. Since the family $\mathbb{H}^{\beta,\varepsilon}$ is admissible by Section 6, we conclude that $C(\mathbb{H}^{\beta,\varepsilon})=2\beta+1$. This proves Theorem 3.1.

\end{document}